\def\ps@headings{%
\def\@oddhead{\mbox{}\scriptsize\rightmark \hfil \thepage}%
\def\@evenhead{\scriptsize\thepage \hfil \leftmark\mbox{}}%
\def\@oddfoot{}%
\def\@evenfoot{}}
\newcommand{\E}{{\mathcal E}}
\newcommand{\G}{{\mathcal G}}
\newcommand{\HH}{{\mathcal H}}
\newcommand{\I}{{\mathcal I}}
\newcommand{\J}{{\mathcal J}}
\newcommand{\LL}{{\mathcal L}}
\newcommand{\X}{{\mathcal X}}
\newcommand{\Y}{{\mathcal Y}}
\newcommand{\V}{{\mathcal V}}
\newcommand{\N}{{\mathcal N}}
\newcommand{\cC}{{\mathscr{C}}}
\newcommand{\bM}{{\boldsymbol M}}
\newcommand{\bHi}{{\boldsymbol H}^{(i)}} 
\newcommand{\bL}{{\boldsymbol L}}
\newcommand{\supp}{{\sf supp}}\newcommand{\dist}{{\mathsf{d}}}
\newcommand{\weight}{{\mathsf{wt}}}
\newcommand{\spn}{{\mathsf{span}}}
\newcommand{\clspn}{{\mathsf{colspan}}}
\newcommand{\rank}{{\mathsf{rank}}}
\newcommand{\mr}{{\text{min-rank}}}
\newcommand{\mrt}{{\text{min-rank}_2}}
\newcommand{\mrq}{{\text{min-rank}_q}}
\newcommand{\define}{\stackrel{\mbox{\tiny $\triangle$}}{=}}
\newcommand{\bu}{{\boldsymbol u}}
\newcommand{\bv}{{\boldsymbol v}}
\newcommand{\by}{{\boldsymbol y}}
\newcommand{\bc}{{\boldsymbol c}}
\newcommand{\bO}{{\boldsymbol 0}}
\newcommand{\bx}{{\boldsymbol{x}}}
\newcommand{\bz}{{\boldsymbol{z}}}
\newcommand{\bG}{{\boldsymbol{G}}}
\newcommand{\be}{{\boldsymbol e}}
\newcommand{\bep}{{\boldsymbol \epsilon}}
\newcommand{\beph}{\hat{\boldsymbol \epsilon}}
\newcommand{\bbti}{{{\boldsymbol \beta}_i}}
\newcommand{\bvi}{{\boldsymbol v}_{i}}
\newcommand{\Na}{{N_q[\alpha(\HH),2\delta+1]}}
\newcommand{\Nk}{{N_q[\kappa_q(\HH),2\delta+1]}}
\newcommand{\NX}{\N_q(\HH,\delta)}
\newcommand{\aG}{{\alpha(\HH)}}
\newcommand{\kG}{{\kappa_q(\HH)}}
\newcommand{\IX}{{\I(q,\HH)}}
\newcommand{\JX}{{\J(\HH)}}
\newcommand{\bxh}{{\hat{\boldsymbol{x}}}}
\newcommand{\fkD}{{\mathfrak D}}
\newcommand{\fkE}{{\mathfrak E}}
\newcommand{\al}{\alpha}
\newcommand{\kp}{\kappa}
\newcommand{\seq}{\subseteq}
\newcommand{\mic}{{$\HH$-IC}}
\newcommand{\dd}{{{$(\delta,\HH)$-ECIC}}}
\newcommand{\ra}{\rightarrow}
\renewcommand{\ge}{\geqslant}
\renewcommand{\le}{\leqslant}
\newcommand{\ff}{\mathbb{F}}
\newcommand{\fq}{\mathbb{F}_q}
\newcommand{\fqn}{\mathbb{F}_q^n}
\newtheorem{corollary}{Corollary}[section]
\newtheorem{definition}{Definition}[section]
\newtheorem{example}{Example}[section]
\newtheorem{remark}{Remark}[section]
\newtheorem{theorem}{Theorem}[section]
\newtheorem{proposition}[theorem]{Proposition}
\newtheorem{lemma}[theorem]{Lemma}
\title{Index Coding and Error Correction}
\date{}                                           
\begin{document}

\author{
  \IEEEauthorblockN{Son Hoang Dau, Vitaly Skachek, and Yeow Meng Chee}
  \IEEEauthorblockA{Division of Mathematical Sciences,
    School of Physical and Mathematical Sciences\\
    Nanyang Technological University,
    21 Nanyang Link, Singapore 637371\\
    Emails: {\tt \{ DauS0002, Vitaly.Skachek, YMChee \} @ntu.edu.sg}}
}

\maketitle

\begin{abstract}
\boldmath
A problem of index coding with side information was first considered by Y. Birk and T. Kol \emph{(IEEE INFOCOM, 1998)}.
In the present work, a generalization of index coding scheme, where transmitted symbols are subject to errors, is studied. 
Error-correcting methods for such a scheme, and their parameters, are investigated. 
In particular, the following question is discussed:
given the side information hypergraph of index coding scheme and the maximal number of erroneous symbols $\delta$, 
what is the shortest length of a linear index code, such that every receiver is able to recover the required information? 
This question turns out to be a generalization of the problem of finding a shortest-length 
error-correcting code with a prescribed error-correcting capability 
in the classical coding theory. 

The Singleton bound and two other bounds, referred to as the $\al$-bound and
the $\kp$-bound, for the optimal length of a linear error-correcting index code (ECIC)
are established. For large alphabets, a construction based on concatenation of an optimal index 
code with an MDS classical code, is shown to attain the Singleton bound. 
For smaller alphabets, however, this construction may not be optimal. 
A random construction is also analyzed. It yields another inexplicit 
bound on the length of an optimal linear ECIC. 
Finally, the decoding of linear ECIC's is discussed. The syndrome decoding is shown to output the exact 
message if the weight of the error vector is less or equal to the error-correcting capability of the 
corresponding ECIC.  
\end{abstract}

\section{Introduction}
\label{sec:introduction}

\subsection{Background}

\PARstart{T}he problem of Index Coding with Side Information (ICSI) was introduced by Birk and Kol \cite{BirkKol98}.
During the transmission, each client might miss a certain part of the data, due to intermittent reception, limited storage capacity or any other reasons. Via a slow backward channel, the clients let the server know which messages they already have in their possession, and which messages they are interested to receive. The server has to find a way to deliver to each client all the messages he requested, yet spending a minimum number of transmissions. As it was shown in \cite{BirkKol98}, the server can significantly reduce the number of transmissions by coding the messages. 

Possible applications of index coding include communications scenarios, in which a satellite or a server broadcasts a set of messages to a set clients, such as daily newspaper delivery or video-on-demand. Index coding with side information can also be used in opportunistic wireless networks~\cite{Katti2006}. 

The ICSI problem has been a subject of several recent studies \cite{Yossef, Yossef-journal, LubetzkyStav, Rouayheb2007, Rouayheb2008, Alon}. This problem can be viewed as a special case of the Network Coding (NC) problem~\cite{Ahlswede},~\cite{KoetterMedard2003}. In particular, as it was shown in~\cite{Rouayheb2008}, every instance of the NC problem can be reduced to an instance of the ICSI problem.

\subsection{Our contribution}

In this work, we generalize the ICSI problem towards a setup with error correction. 
We extend some known results on index coding to a case where any receiver 
can correct up to a certain number of errors. The problem of designing such error-correcting 
index codes (ECIC's) naturally generalizes the problem of constructing classical error-correcting codes. 
We establish an upper bound (the $\kp$-bound) and a lower bound (the $\al$-bound) 
on the shortest length of a linear ECIC, which is able to correct any error pattern of size up to $\delta$. 
We also derive an analog of the Singleton bound, and show that this bound is tight for codes over large alphabets. 
We also consider random ECIC's. By analyzing their parameters, we obtain an upper bound on their length. 
Finally, we discuss the decoding of linear ECIC's.
We show that the syndrome decoding results in a correct result, provided that the number of errors does not exceed
the error-correcting capability of the code.       

The problem of error correction for NC was studied in several previous works. However, these results are not 
directly applicable for the ICSI problem. First, the existing works only consider the multicast scenario, 
while the ICSI problem, however, is a special case of the non-multicast 
NC problem. Second, the ICSI problem can be modeled by the NC scenario~\cite{Alon}, 
yet, this requires that there are directed edges from particular sources to each sink, 
which provide the side information. The symbols transmitted on these special edges, unlike for error-correcting NC, are not allowed to be corrupted.

\section{Preliminaries}
\label{sec:preliminaries}

Let $\fq$ be the finite field of $q$ elements, where $q$ is a power of prime, and $\fq^* = \fq \backslash \{0\}$. 
Let $[n] = \{1,2,\ldots,n\}$. 
For the vectors $\bu, \bv \in \fq^n$, we use $\dist(\bu,\bv)$ to denote the
the Hamming distance between $\bu$ and $\bv$. 
If $\bu \in \fq^n$ and $\bM \subseteq \fq^n$ is a set of vectors (or a vector subspace), 
then this notation can be extended to 
\[
\dist(\bu,\bM) = \min_{\bv \in \bM} \dist(\bu, \bv) \; . 
\]
Given $q$, $k$, and $d$, let $N_q[k,d]$ denote the length of the shortest linear code over $\fq$ which has dimension $k$ and minimum distance $d$.  
The \emph{support} of a vector $\bu \in \fqn$ is defined by $\supp(\bu) \define \{i \in [n]: u_i \neq 0\}$.
The Hamming weight of $\bu$ is defined by $\weight(\bu) \define |\supp(\bu)|$.  
Suppose $E \subseteq [n]$. We write $\bu \lhd E$ whenever $\supp(\bu) \seq E$.  

We use $\be_i = (\underbrace{0,\ldots,0}_{i-1},1,\underbrace{0,\ldots,0}_{n-i}) \in \fqn$ to denote the unit vector, which has a one at the $i$th position, and zeros elsewhere.
For a vector $\by = (y_1,y_2,\ldots,y_n)$ and a subset $B = \{i_1,i_2,\ldots,i_b\}$ of $[n]$, where $i_1 < i_2 < \cdots <i_b$, let $\by_B$ denote the vector $(y_{i_1},y_{i_2},\ldots,y_{i_b})$. 

For an $n \times N$ matrix $\bL$, let $\bL_i$ denote its $i$th row. For a set $E \subseteq [n]$, 
let $\bL_E$ denote the $|E| \times N$ matrix obtained from $\bL$ by deleting all the rows of $\bL$ which are not indexed by the elements of $E$. 
For a set of vectors $\bM$, we use notation $\spn(\bM)$ to denote the linear space spanned by the vectors in $\bM$.  
We also use notation $\clspn(\bL)$ for the linear space spanned by the columns of the matrix $\bL$. 

Let $\G = (\V, \E)$ be a graph with a vertex set $\V$ and an edge set $\E$.
A directed graph $\G$ is called \emph{symmetric} if 
\[
(u, v) \in \E \quad \Leftrightarrow \quad (v, u) \in \E \; . 
\] 
The \emph{independence number} of an undirected graph $\G$ is denoted by $\al(\G)$. 
There is a natural correspondence between undirected graphs and 
directed symmetric graphs.
By using this correspondence, the definition of independence number
is naturally extended to directed symmetric graphs.

\section{Error-Correcting Index Coding with Side Information}
\label{sec:ic_ecc}

Index Coding with Side Information problem considers the following communications scenario. 
There is a unique sender (or source) $S$, who has a vector of messages 
$\bx = (x_1, x_2, \ldots, x_n)$ in his possession. 
There are also $m$ receivers $R_1,R_2,\ldots,R_m$, receiving information from $S$ via a broadcast channel. 
For each $i \in [m]$, $R_i$ has side information, 
i.e. $R_i$ owns a subset of messages $\{ x_j \}_{j \in \X_i}$, where $\X_i \subseteq [n]$. 
Each $R_i$, $i \in [m]$, 
is interested in receiving the message $x_{f(i)}$ (we say that $R_i$ requires $x_{f(i)}$), 
where the mapping $f: [m] \ra [n]$ satisfies $f(i) \notin \X_i$ for all $i \in [m]$. 
Hereafter, we use the notation $\X = (\X_1, \X_2, \ldots, \X_m)$. 
An instance of the ICSI problem is given by a quadruple $(m,n,\X,f)$. 
An instance of the ICSI problem can also be conveniently described by the following directed hypergraph~\cite{Alon}. 

\vskip 10pt
\begin{definition}
Let $(m, n, \X, f)$ be an instance of the ICSI problem.  
The corresponding \emph{side information (directed) hypergraph} $\HH = \HH(m,n,\X,f)$ is defined by the vertex set
$\V = [n]$ and the edge set $\E_\HH$, where 
\[
\E_\HH = \{ (f(i), \X_i) \; : \; i \in [n]\} \; . 
\]
We often refer to $(m,n,\X,f)$ as an instance of the ICSI problem described by the hypergraph $\HH$. 
\end{definition}
\vskip 10pt 

Each side information hypergraph $\HH = (\V, \E_\HH)$ can be 
associated with the directed graph $\G_\HH = (\V,\E)$ in the following way. 
For each directed edge $(f(i), \X_i) \in \E_\HH$ there will be $|\X_i|$ directed edges $(f(i), v) \in \E$, for $v \in \X_i$. 
When $m = n$ and $f(i) = i$ for all $i \in [m]$, the graph $\G_\HH$ is, in fact, the \emph{side information graph}, 
defined in~\cite{Yossef}. 

Due to noise, the symbols received by $R_i$, $i \in [m]$, may be subject to errors.  
Assume that $S$ broadcasts a vector $\by \in \fq^N$. 
Let $\bep_i \in \fq^N$ be the error affecting the information received by $R_i$, $i \in [m]$. 
Then $R_i$ actually receives the vector
$\by_i = \by + \bep_i \in \fq^N$, instead of $\by$. 

\medskip
\begin{definition} 
Consider an instance of the ICSI problem described by $\HH = \HH(m,n,\X,f)$.
A \emph{$\delta$-error-correcting index code} ({\dd}) over $\fq$ for this instance is an encoding function
\begin{eqnarray*}
\fkE & : & \fq^n \rightarrow \fq^N \; , 
\end{eqnarray*}
such that for each receiver $R_i$, $i \in [m]$, there exists a decoding function
\[
\fkD_i \: : \: \fq^N \times \fq^{|\X_i|} \rightarrow \fq \; , \\
\]
satisfying
\[
\forall \bx, \bep_i \in \fq^n, \; \weight(\bep_i) \le \delta \; : \; \fkD_i(\fkE(\bx) + \bep_i, \bx_{\X_i}) = x_{f(i)}\; .
\]
\end{definition}

If $\delta=0$, we refer to such $\fkE$ as a \emph{non-error-correcting} index code, or just $\HH$-IC. 
The parameter $N$ is called the \emph{length} of the index code. 
In the scheme corresponding to the code $\fkE$, 
$S$ broadcasts a vector $\fkE(\bx)$ of length $N$ over $\fq$.   

\vskip 10pt 
\begin{definition} 
A \emph{linear index code} is an index code, for which the encoding function 
$\fkE$ is a linear transformation over $\fq$. 
Such a code can be described as 
\[
\forall \bx \in \fq^n \; : \; \fkE (\bx) = \bx \bL \; , 
\]
where $\bL$ is an $n \times N$ matrix over $\fq$. The matrix $\bL$ is called the \emph{matrix corresponding 
to the index code $\fkE$}, while $\fkE$ is referred to as the \emph{linear index code based on $\bL$}. 
\end{definition}
\vskip 10pt 

\begin{definition}
An \emph{optimal} linear {\dd} over $\fq$ is 
a linear {\dd} over $\fq$ of the smallest possible length $\NX$. 
\end{definition}
\vskip 10pt   

Hereafter, we assume that $\X = (\X_i)_{i \in [m]}$ is known to $S$.
We also assume that the code $\fkE$ is known to each receiver $R_i$, $i \in [m]$. 

\medskip
\begin{definition}
Suppose $\HH = \HH(m,n,\X,f)$ corresponds to an instance of the ICSI problem. 
Then the \emph{min-rank} of $\HH$ over $\ff_q$ is defined as 
\begin{multline*}
\kp_q(\HH) \define \min\{\rank_{\fq}(\{\bvi + \be_{f(i)}\}_{i \in [m]}) \; : \; \\ \bvi \in \fqn \; , \; \bvi \lhd \X_i\} \; .
\end{multline*} 
\end{definition} 
\medskip
Observe that $\kp_q(\HH)$ generalizes the $\mr$ over $\fq$ of the side information graph, which was defined in~\cite{Yossef}.
More specifically, when $m = n$ and $f(i) = i$ for all $i \in [m]$, $\G_\HH$ becomes the side information graph, 
and $\kp_q(\HH) = \mrq(\G_\HH)$. 
The $\mr$ was shown in~\cite{Yossef, Yossef-journal} to be the smallest number 
of transmissions in a linear index code. 

\vskip 10pt 
\begin{lemma}(\cite{Yossef,DauSkachekChee2010})
\label{lem:recovery}
Consider an instance of the ICSI problem described by $\HH = \HH(m,n,\X,f)$ . 
\begin{enumerate}
\item
The matrix $\bL$ corresponds to 
a linear {\mic} over $\fq$ if and only if for each $i \in [m]$ there exists $\bvi \in \fqn$ such that
$\bvi \lhd \X_i$ and $\bvi + \be_{f(i)} \in \clspn(\bL)$. 
\item The smallest possible length of a linear {\mic} over $\fq$ is $\kp_q(\HH)$. 
\end{enumerate}
\end{lemma}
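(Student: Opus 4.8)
The plan is to prove the recovery criterion of Part 1 first and then obtain Part 2 as a rank-minimization corollary. Throughout I would exploit the duality $\clspn(\bL) = \mathcal{K}^{\perp}$, where $\mathcal{K} = \{\bz \in \fqn : \bz\bL = \bO\}$ is the left kernel and $\perp$ is taken with respect to the standard inner product on $\fqn$; this holds because $\dim \clspn(\bL) = \rank(\bL) = n - \dim \mathcal{K}$. First I would settle the \emph{if} direction of Part 1 with an explicit linear decoder: if $\bvi \lhd \X_i$ satisfies $\bvi + \be_{f(i)} \in \clspn(\bL)$, pick $\bw \in \fq^N$ with $\bL\bw = \bvi + \be_{f(i)}$, and from the received word $\by = \bx\bL$ and the side information $\bx_{\X_i}$ compute $\by\bw = \bx\bL\bw = \bx(\bvi + \be_{f(i)}) = \sum_{j \in \X_i} x_j v_{i,j} + x_{f(i)}$, where the sum ranges only over $\X_i$ since $\bvi \lhd \X_i$. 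Subtracting this known quantity yields $x_{f(i)}$, so doing this for every $i$ makes $\bL$ a valid \mic.

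For the \emph{only if} direction, I would first translate decodability into a kernel condition: if $R_i$ recovers $x_{f(i)}$ from $(\bx\bL, \bx_{\X_i})$, then applying the decoder to a message $\bz$ with $\bz\bL = \bO$ and $z_j = 0$ for $j \in \X_i$ (compared against the all-zero message) forces $z_{f(i)} = 0$; that is, the coordinate functional $\bz \mapsto z_{f(i)}$ vanishes on $\mathcal{K} \cap E_{\overline{\X_i}}$, where $E_T = \spn\{\be_j : j \in T\}$. The hard part will be converting this vanishing into the existence of $\bvi$, and I would do it contrapositively using the subspace identity $(A \cap B)^{\perp} = A^{\perp} + B^{\perp}$. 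Set $S = \X_i \cup \{f(i)\}$. The absence of a suitable $\bvi$ means every $\bu \in \clspn(\bL) \cap E_S$ has $u_{f(i)} = 0$, i.e. $\be_{f(i)} \in (\clspn(\bL) \cap E_S)^{\perp} = \mathcal{K} + E_{\overline{S}}$. Decomposing $\be_{f(i)} = \bz + \bw$ with $\bz \in \mathcal{K}$ and $\supp(\bw) \seq \overline{S}$, and using $f(i) \notin \X_i$, produces a $\bz \in \mathcal{K}$ with $z_{f(i)} = 1$ and $z_j = 0$ on $\X_i$, contradicting the vanishing condition. Hence some $\bu \in \clspn(\bL)$ is supported on $S$ with $u_{f(i)} = 1$, and since $f(i) \notin \X_i$ this $\bu$ is exactly $\bvi + \be_{f(i)}$ for some $\bvi \lhd \X_i$.

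Finally, Part 2 follows by optimizing $N$. For the lower bound, any valid $\bL$ of length $N$ supplies, via Part 1, vectors $\bvi \lhd \X_i$ with all $\bvi + \be_{f(i)} \in \clspn(\bL)$, so $\rank(\{\bvi + \be_{f(i)}\}_{i \in [m]}) \le \dim \clspn(\bL) = \rank(\bL) \le N$; minimizing the left side over admissible choices gives $\kp_q(\HH) \le N$. For the matching construction, I would take an optimal family $\{\bvi\}$ attaining $\kp_q(\HH) = r$, put $V = \spn\{\bvi + \be_{f(i)}\}_{i \in [m]}$ with $\dim V = r$, and let $\bL$ be any $n \times r$ matrix whose columns form a basis of $V$. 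Then $\clspn(\bL) = V$ contains every $\bvi + \be_{f(i)}$, so Part 1 certifies $\bL$ as a valid \mic of length $r = \kp_q(\HH)$, establishing optimality.
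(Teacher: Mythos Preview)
Your argument is correct. Note, however, that the paper does not actually supply a proof of this lemma: it is quoted from \cite{Yossef,DauSkachekChee2010} and used as a black box, so there is no ``paper's own proof'' to compare against. Your write-up would serve perfectly well as the missing justification.

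One small streamlining of your \emph{only if} step: you can avoid the contrapositive and the auxiliary set $S$ altogether. Once you have established that the functional $\bz \mapsto z_{f(i)}$ vanishes on $\mathcal{K} \cap E_{\overline{\X_i}}$, this says precisely that $\be_{f(i)} \in (\mathcal{K} \cap E_{\overline{\X_i}})^{\perp} = \mathcal{K}^{\perp} + E_{\overline{\X_i}}^{\perp} = \clspn(\bL) + E_{\X_i}$. Writing $\be_{f(i)} = \bu - \bvi$ with $\bu \in \clspn(\bL)$ and $\bvi \lhd \X_i$ gives $\bvi + \be_{f(i)} = \bu \in \clspn(\bL)$ directly. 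This is the same orthogonal-complement identity you invoke, just applied one step earlier, and it bypasses the contradiction.
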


\section{Basic Properties}
\label{subsec:ic_ecc_model}

We define the set of vectors
\begin{multline*}
\IX \define 
\left\{ \bz \in \fqn \; : \; \exists i \in [m] \text{ s.t. } \bz_{\X_i} = \bO, \; z_{f(i)} \neq 0 \right\}.
\end{multline*}
For all $i \in [m]$, we also define 
$\Y_i \define [n] \backslash \Big( \{f(i)\} \cup \X_i \Big)$. 
Then the collection of supports of all vectors in $\IX$ is given by
\begin{equation}
\label{Jdef} 
\JX \define \bigcup_{i \in [m]} \Big \{ \{f(i)\} \cup Y_i \; : \; Y_i \subseteq \Y_i \Big\}.
\end{equation} 

\begin{lemma}
\label{lem:decodability}
The matrix $\bL$ corresponds to a $(\delta, \HH)$-ECIC over $\fq$ if and only if 
\begin{equation} 
\label{2:E9}
\weight\left(\bz \bL\right) \geq 2\delta+1 \text{ for all } \bz \in \IX \; . 
\end{equation} 
Equivalently, $\bL$ corresponds to a $(\delta, \HH)$-ECIC over $\fq$ if and only if
\begin{equation}
\label{3:E9}
\weight\left(\sum_{i \in K}z_i\bL_i\right) \geq 2\delta + 1,
\end{equation}
for all $K \in \JX$ and for all choices of $z_i \in \fq^*$, $i \in K$. 
\end{lemma}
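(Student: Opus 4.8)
The plan is to peel the definition of a {\dd} apart one receiver at a time, reducing the decodability requirement to a minimum-distance condition on pairs of encoded messages, and then to rewrite that condition first in terms of the difference set $\IX$ and finally in terms of the supports collected in $\JX$.

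First I fix a receiver index $i\in[m]$ and claim that a decoder $\fkD_i$ with the required property exists if and only if $\dist(\bx\bL,\bx'\bL)\ge 2\delta+1$ for every pair $\bx,\bx'\in\fqn$ satisfying $\bx_{\X_i}=\bx'_{\X_i}$ and $x_{f(i)}\ne x'_{f(i)}$. For the necessity of this condition I argue by contradiction: if such a pair had $\dist(\bx\bL,\bx'\bL)\le 2\delta$, I would split the (at most $2\delta$) coordinates on which $\bx\bL$ and $\bx'\bL$ differ into two blocks of size at most $\delta$ and thereby build error vectors $\bep_i,\bep_i'$ of weight at most $\delta$ with $\bx\bL+\bep_i=\bx'\bL+\bep_i'$; receiver $R_i$ would then see the same channel output together with the same side information $\bx_{\X_i}=\bx'_{\X_i}$, yet would be required to return the two distinct symbols $x_{f(i)}$ and $x'_{f(i)}$, which is impossible. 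For sufficiency I exhibit an explicit decoder: on input $(\by_i,\bx_{\X_i})$, let $\fkD_i$ output $x'_{f(i)}$ for any message $\bx'$ with $\bx'_{\X_i}=\bx_{\X_i}$ and $\dist(\bx'\bL,\by_i)\le\delta$. The transmitted message is one such candidate because $\weight(\bep_i)\le\delta$, and any two candidates $\bx',\bx''$ obey $\dist(\bx'\bL,\bx''\bL)\le 2\delta$ by the triangle inequality, so the distance condition forces $x'_{f(i)}=x''_{f(i)}$; hence the output is well defined and equals $x_{f(i)}$. I expect this equivalence, and in particular the decoder construction together with the uniqueness argument, to be the main substance of the proof.

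Next I translate the per-receiver distance condition into statement \eqref{2:E9}. Writing $\bz=\bx-\bx'$, the constraints $\bx_{\X_i}=\bx'_{\X_i}$ and $x_{f(i)}\ne x'_{f(i)}$ become $\bz_{\X_i}=\bO$ and $z_{f(i)}\ne 0$, which is exactly the defining property of membership of $\bz$ in $\IX$ witnessed by $i$; moreover $\dist(\bx\bL,\bx'\bL)=\weight\bigl((\bx-\bx')\bL\bigr)=\weight(\bz\bL)$. Conversely every $\bz\in\IX$ arises as such a difference. Since $\bL$ is a {\dd} precisely when every receiver decodes correctly, quantifying the equivalence of the previous paragraph over all $i\in[m]$ yields that $\bL$ corresponds to a $(\delta,\HH)$-ECIC if and only if $\weight(\bz\bL)\ge 2\delta+1$ for all $\bz\in\IX$.

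Finally I pass from \eqref{2:E9} to \eqref{3:E9} by bookkeeping on supports. For any $\bz$ with $\supp(\bz)=K$ one has $\bz\bL=\sum_{j\in K}z_j\bL_j$ with every coefficient $z_j$, $j\in K$, nonzero. It remains to check that $\JX$ is exactly the family of supports of vectors in $\IX$: if $\bz\in\IX$ is witnessed by $i$, then $\supp(\bz)$ avoids $\X_i$ and contains $f(i)$ (recall $f(i)\notin\X_i$), so $\supp(\bz)=\{f(i)\}\cup Y_i$ for some $Y_i\subseteq\Y_i$; conversely, for any such set $K$ and any assignment of nonzero values on its coordinates, the resulting vector lies in $\IX$. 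Therefore ranging over $\bz\in\IX$ is the same as ranging over $K\in\JX$ together with all choices of $z_j\in\fq^*$, $j\in K$, and this substitution turns \eqref{2:E9} into \eqref{3:E9}, completing the argument.
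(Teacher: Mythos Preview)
Your proof is correct and follows essentially the same route as the paper's: reduce decodability of each receiver to the disjointness of the $\delta$-balls around $\bx\bL$ and $\bx'\bL$ for pairs with matching side information and differing demanded symbol, then substitute $\bz=\bx-\bx'$ to obtain \eqref{2:E9}, and finally read off \eqref{3:E9} from the support description $\JX$. The only difference is granularity: you spell out the explicit minimum-distance decoder for sufficiency and the support bookkeeping for the passage to \eqref{3:E9}, whereas the paper states the ball-disjointness equivalence directly and calls the final step straightforward.
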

\begin{proof}
For each $\bx \in \fq^n$, we define
\[
B(\bx,\delta) = \{\by \in \fq^N \; : \; \by = \bx \bL + \bep, \; \bep \in \fq^N , \; \weight(\bep) \leq \delta \} \; ,
\] 
the set of all vectors resulting from at most $\delta$ errors in the transmitted vector 
associated with the information vector $\bx$. 
Then the receiver $R_i$ can recover $x_{f(i)}$ correctly
if and only if 
\[
B(\bx,\delta) \cap B(\bx',\delta) = \varnothing,
\]
for every pair $\bx,\bx' \in \fqn$ satisfying:
\[
\bx_{\X_i} = \bx'_{\X_i} \text{ and } x_{f(i)} \neq x'_{f(i)} \; . 
\]
(Observe that $R_i$ is interested only in the bit $x_{f(i)}$, not in the whole vector $\bx$.)

Therefore, $\bL$ corresponds to a $(\delta, \HH)$-ECIC if and only if the following condition is satisfied: 
for all $i \in [m]$ and for all $\bx,\bx' \in \fqn$ such that
$\bx_{\X_i} = \bx'_{\X_i}$ and $x_{f(i)} \neq x'_{f(i)}$, it holds 
\begin{multline}
\forall \bep, \bep' \in \fq^N, \; \weight(\bep) \le \delta, \; \weight(\bep') \le \delta \; : \\
 \bx \bL + \bep \neq \bx' \bL + \bep' \; . 
\label{eq:unique-decode}
\end{multline}
Denote $\bz = \bx' - \bx$. Then, the condition in~(\ref{eq:unique-decode}) can be reformulated as follows: 
for all $i \in [n]$ and for all $\bz \in \fqn$ such that $\bz_{\X_i} = \bO$ and $z_{f(i)} \neq 0$, it holds
\begin{multline}
\forall \bep, \bep' \in \fq^N, \; \weight(\bep) \le \delta, \; \weight(\bep') \le \delta \; : \;
\bz \bL \neq \bep -\bep' \; . 
\label{eq:unique-decode-z}
\end{multline}
The equivalent condition is that for all $\bz \in \IX$, 
\[
\weight(\bz \bL) \ge 2 \delta + 1 \; . 
\]
Inequality~(\ref{3:E9}) follows from this condition in a straight-forward manner. 
\end{proof}

\medskip
\begin{corollary}
\label{cor:decodability}
For all $i \in [m]$, let
\[
\bM_i \define \mbox{span} \left(  \{ \bL_j \; : \; j \in \Y_i \} \right) \; . 
\]  
Then, the matrix $\bL$ corresponds to a $(\delta, \HH)$-ECIC over $\fq$ if and only if  
\begin{equation} 
\label{cor:decod}
\forall i \in [m] \; : \; \dist(\bL_{f(i)}, \bM_i) \ge 2 \delta + 1 \; .  
\end{equation} 
\end{corollary}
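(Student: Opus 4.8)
The plan is to obtain the corollary directly from the support-based characterization~(\ref{3:E9}) in Lemma~\ref{lem:decodability}, rewriting that condition one receiver at a time. Since the statement of Lemma~\ref{lem:decodability} already shows that $\bL$ corresponds to a $(\delta,\HH)$-ECIC if and only if~(\ref{3:E9}) holds, it suffices to prove that~(\ref{3:E9}) is equivalent to~(\ref{cor:decod}). First I would fix $i \in [m]$ and unwind the definition of the distance to a subspace: because $\bM_i = \spn(\{\bL_j : j \in \Y_i\})$, every $\bw \in \bM_i$ is of the form $\sum_{j \in \Y_i} a_j \bL_j$ with $a_j \in \fq$, so
\[
\dist(\bL_{f(i)}, \bM_i) = \min_{(a_j)_{j \in \Y_i}} \weight\Bigl(\bL_{f(i)} - \sum_{j \in \Y_i} a_j \bL_j\Bigr) .
\]
Writing $b_j = -a_j$, the inequality~(\ref{cor:decod}) for this $i$ thus says precisely that $\weight\bigl(\bL_{f(i)} + \sum_{j \in \Y_i} b_j \bL_j\bigr) \ge 2\delta+1$ for every tuple $(b_j)_{j \in \Y_i}$ with $b_j \in \fq$.

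Next I would match this with the family of inequalities in~(\ref{3:E9}) originating from the same $i$. Recall that $\JX$ is the union over $i$ of the sets $\{f(i)\} \cup Y_i$ with $Y_i \subseteq \Y_i$, and that every such set contains $f(i)$. For $K = \{f(i)\} \cup Y_i$ and nonzero coefficients $z_j \in \fq^*$, I would factor out $z_{f(i)}$ — legitimate because $f(i) \in K$ forces $z_{f(i)} \neq 0$ — and use the invariance of Hamming weight under multiplication by a nonzero scalar to rewrite
\[
\weight\Bigl(\sum_{j \in K} z_j \bL_j\Bigr) = \weight\Bigl(\bL_{f(i)} + \sum_{j \in Y_i} z'_j \bL_j\Bigr), \qquad z'_j = z_j / z_{f(i)} \in \fq^* .
\]

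The only remaining point is that ranging over all subsets $Y_i \subseteq \Y_i$ together with all nonzero coefficients $z'_j$ on $Y_i$ produces exactly the same collection of vectors $\bL_{f(i)} + \sum_j b_j \bL_j$ as ranging over all tuples $(b_j)_{j \in \Y_i}$ with $b_j \in \fq$: given a tuple, take $Y_i = \{j : b_j \neq 0\}$ and $z'_j = b_j$; conversely, a subset with nonzero coefficients extends by zeros. Hence the subfamily of~(\ref{3:E9}) attached to $i$ is equivalent to~(\ref{cor:decod}) for that $i$. Taking the conjunction over all $i \in [m]$ — and noting that the weight condition on a fixed $K$ with fixed coefficients does not depend on which receiver $K$ is associated with, so that quantifying over $K \in \JX$ coincides with quantifying over $i$ and then over $Y_i \subseteq \Y_i$ — yields the stated equivalence.

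I expect the main (and rather modest) obstacle to be the bookkeeping of the quantifiers rather than any genuine difficulty: one must check that passing between the ``subset plus nonzero coefficients'' description used in~(\ref{3:E9}) and the ``full tuple of possibly-zero coefficients'' description underlying the distance $\dist(\bL_{f(i)}, \bM_i)$ loses nothing, and that the factorization by the guaranteed-nonzero scalar $z_{f(i)}$ is always available. No idea beyond Lemma~\ref{lem:decodability} is required.
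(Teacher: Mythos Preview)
Your argument is correct and is exactly the natural derivation from Lemma~\ref{lem:decodability}; the paper itself states the corollary without proof, treating it as immediate from that lemma, so your write-up simply makes explicit the reformulation the authors left to the reader. The bookkeeping you flag (factoring out $z_{f(i)}\neq 0$, and the bijection between ``subset with nonzero coefficients'' and ``full tuple over $\Y_i$'') is handled correctly, and the fact that $f(i)\notin\Y_i$ ensures the factorization is clean.
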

\medskip

\medskip
\begin{example}
\label{example:1}
Let $q = 2$, $m = n=3$, and $f(i) = i$ for $i \in [3]$. 
Suppose $\X_1 = \{2,3\}$, $\X_2 = \{1,3\}$, and $\X_3 = \{1,2\}$. Let 
\[
\bL = \begin{pmatrix} 1 & 1 & 1 & 0 \\ 1 & 1 & 0 & 1\\ 1 & 0 & 1 & 1\end{pmatrix}. 
\]
Note that $\bL$ generates a $[4,3,1]_2$ code, which has minimum distance one. However, 
the index code based on $\bL$ can still correct one error. Indeed, let $\HH = \HH(3,3,\X,f)$, we have
\[
\I(2,\HH) = \{100,010,001\}. 
\]
Since each row of $\bL$ has weight at least three, it follows that 
$\weight(\bz \bL) \geq 3$ for all $\bz \in \I(2,\HH)$. By Lemma~\ref {lem:decodability}, 
$\bL$ corresponds to a $(1,\HH)$-ECIC over $\ff_2$. 
\end{example} 
\vskip 10pt 

\begin{example}
Assume that $m = n$ and $f(i) = i$ for all $i \in [m]$. Furthermore, suppose that $\X_i = \varnothing$ 
for all $i \in [m]$ (i.e. there is no side information available to the receivers). Let $\HH = \HH(m,n,\X,f)$.
Then, $\IX = \fqn \backslash \{ \bO \}$.
Hence, by Lemma~\ref {lem:decodability}, the $n \times N$ matrix $\bL$ corresponding to a {\dd} over $\fq$ 
(for some integer $\delta \ge 0$) is a generating matrix of an $[N,n, \ge 2 \delta+1]_q$ linear code. 
Thus, the problem of designing an ECIC is reduced to the problem of constructing a
classical linear error-correcting code.
\end{example}
\medskip 

\section{The $\al$-Bound and the $\kp$-Bound}
\label{subsec:bounds}

Let $(m,n,\X,f)$ be an instance of the ICSI problem, and let $\HH$ be the corresponding side information hypergraph. 
Next, we introduce the following definitions for the hypergraph $\HH$. 

\vskip 10pt 
\begin{definition}
A subset $H$ of $[n]$ is called a \emph{generalized independent set}
in $\HH$ if every nonempty subset $K$ of $H$ belongs to $\JX$. 
\end{definition}
\vskip 10pt 

\begin{definition}
A generalized independent set of the largest size in $\HH$ is called a \emph{maximum generalized independent set}.
	The size of a maximum generalized independent set in $\HH$ is called the \emph{generalized independence number}, 
	and denoted by $\al(\HH)$. 
\end{definition}
\vskip 10pt 

When $m = n$ and $f(i) = i$ for all $i \in [n]$, the
generalized independence number of $\HH$ is equal to the maximum size of an acyclic induced subgraph of $\G_\HH$, which was introduced in~\cite{Yossef}. In particular, when $\G_\HH$ is symmetric, $\al(\HH)$ is the 
independence number of $\G_\HH$. We omit the proof.    

\vskip 10pt
\begin{theorem}[$\al$-bound]
\label{thm:lowerbound} 
The length of an optimal linear $(\delta, \HH)$-ECIC over $\fq$ satisfies 
\[
\NX \ge \Na \; .
\]
\end{theorem}
\begin{proof}
Consider an $n \times N$ matrix $\bL$, which corresponds to a $(\delta, \HH)$-ECIC.
Let $H = \{i_1,i_2,\ldots,i_\aG\}$ be a maximum generalized independent set in $\HH$. 
Then, every subset $K \seq H$ satisfies $K \in \JX$. Therefore, 
\[
\weight\left( \sum_{i \in K} z_i \bL_i \right) \geq 2\delta + 1
\]
for all $K \subseteq H$, $K \neq \varnothing$, and for all choices of $z_i \in \fq^*$, $i \in K$. 
Hence, the $\aG$ rows of $\bL$, namely $\bL_{i_1},\bL_{i_2},\ldots,\bL_{i_\aG}$, form
a generator matrix of an $[N,\aG,2\delta+1]_q$ code. Therefore,
\[
N \geq \Na \; .
\]
\end{proof}

The following proposition is based on the fact that concatenation of 
a $\delta$-error-correcting code with an optimal (non-error-correcting) $\HH$-IC yields a $(\delta, \HH)$-ECIC. 

\medskip
\begin{proposition}[$\kp$-bound]
\label{pro:upperbound}
The length of an optimal $(\delta, \HH)$-ECIC over $\fq$ satisfies
\[
\NX \leq \Nk \; .
\]
\end{proposition}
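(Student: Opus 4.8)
The plan is to prove the $\kp$-bound by an explicit concatenation construction, exhibiting a $(\delta,\HH)$-ECIC whose length equals $\Nk$, which immediately gives the upper bound on the optimal length $\NX$. The starting ingredient is an optimal linear (non-error-correcting) $\HH$-IC: by part (2) of Lemma~\ref{lem:recovery}, such a code exists with length $\kp_q(\HH)$, and its matrix, call it $\bG$, is an $n \times \kp_q(\HH)$ matrix over $\fq$ satisfying the recovery condition in part (1) of that lemma. The second ingredient is a classical linear code: let $\bC$ be a generator matrix of a shortest $[\,\Nk,\,\kp_q(\HH),\,2\delta+1\,]_q$ code, whose existence and length are guaranteed by the definition of $N_q[\cdot,\cdot]$.

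First I would form the concatenated matrix $\bL \define \bG \bC$, an $n \times \Nk$ matrix over $\fq$. I would then verify that $\bL$ corresponds to a $(\delta,\HH)$-ECIC using the characterization in Lemma~\ref{lem:decodability}, namely that $\weight(\bz\bL) \ge 2\delta+1$ for every $\bz \in \IX$. The key algebraic observation is that for any $\bz \in \fqn$, we have $\bz\bL = (\bz\bG)\bC$, so the codeword $\bz\bL$ is the image under the classical encoder $\bC$ of the vector $\bz\bG$. Since $\bC$ generates a code of minimum distance $2\delta+1$, its encoding map sends any nonzero vector in $\fq^{\kp_q(\HH)}$ to a codeword of weight at least $2\delta+1$; hence it suffices to show that $\bz\bG \neq \bO$ for every $\bz \in \IX$.

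The main step, and the place where I expect the real content to lie, is establishing that $\bz\bG \ne \bO$ for all $\bz \in \IX$. Here I would invoke the fact that $\bG$ is the matrix of a valid $\HH$-IC. By Lemma~\ref{lem:decodability} applied with $\delta = 0$ (the non-error-correcting case), the condition for $\bG$ to correspond to an $\HH$-IC is exactly $\weight(\bz\bG) \ge 1$, i.e. $\bz\bG \ne \bO$, for all $\bz \in \IX$. In other words, the defining recovery property of the underlying index code $\bG$ is precisely the nonvanishing statement I need. I would make this connection explicit: the ``$0$-error-correcting'' instance of Lemma~\ref{lem:decodability} says that a matrix corresponds to an $\HH$-IC iff it does not kill any vector in $\IX$, and $\bG$ corresponds to an $\HH$-IC by construction.

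Chaining the two facts, for any $\bz \in \IX$ the intermediate vector $\bz\bG$ is nonzero, and therefore its image $\bz\bL = (\bz\bG)\bC$ has weight at least $2\delta+1$. By Lemma~\ref{lem:decodability}, $\bL$ corresponds to a $(\delta,\HH)$-ECIC of length $\Nk$, so the optimal length satisfies $\NX \le \Nk$, completing the proof. The only subtlety to handle carefully is the dimension bookkeeping in the concatenation: one must confirm that the rows of $\bG$ really lie in $\fq^{\kp_q(\HH)}$ so that the product $\bG\bC$ is well-defined and that the classical code's dimension $\kp_q(\HH)$ matches the number of columns of $\bG$; this is immediate from the choice of parameters but is worth stating to make the argument self-contained.
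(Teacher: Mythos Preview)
Your proposal is correct and follows essentially the same concatenation construction as the paper: take an optimal $\HH$-IC matrix $\bG$ of width $\kp_q(\HH)$, post-multiply by the generator matrix of an optimal $[\Nk,\kp_q(\HH),2\delta+1]_q$ code, and conclude that $\bL=\bG\bC$ is a $(\delta,\HH)$-ECIC. The only cosmetic difference is that the paper verifies correctness operationally (each receiver decodes the outer code to recover $\by=\bx\bG$, then recovers $x_{f(i)}$), whereas you verify it algebraically via the $\delta=0$ case of Lemma~\ref{lem:decodability}; both are valid and yield the same bound.
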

The proof of this proposition is omitted due to lack of space. 
\vskip 3pt

\begin{corollary}
\label{coro:sandwiched}
The length of an optimal linear $(\delta, \HH)$-ECIC over $\fq$ satisfies
\[
\Na \leq \NX \leq \Nk \; .
\] 
\end{corollary}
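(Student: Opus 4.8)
The plan is to obtain the corollary as an immediate consequence of the two bounds just established, since both the $\al$-bound and the $\kp$-bound constrain the \emph{same} quantity, namely the optimal length $\NX$ of a linear $(\delta,\HH)$-ECIC over $\fq$. No new construction, counting argument, or estimate is required here; the entire task is one of combining the preceding results in sequence.

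Concretely, I would first invoke Theorem~\ref{thm:lowerbound}, which supplies the left-hand inequality $\NX \ge \Na$. I would then invoke Proposition~\ref{pro:upperbound}, which supplies the right-hand inequality $\NX \le \Nk$. Because both inequalities refer to the identical optimal length $\NX$, chaining them yields the desired sandwich
\[
\Na \le \NX \le \Nk \; .
\]

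The one point I would flag is that the chain is non-vacuous, i.e. the lower bound cannot exceed the upper bound. This requires no separate verification: the corollary itself \emph{a fortiori} implies $\Na \le \Nk$, so consistency of the two bounds is an output of the argument rather than an input to it. Consequently there is no genuine obstacle to overcome; the substantive content resides entirely in Theorem~\ref{thm:lowerbound} and Proposition~\ref{pro:upperbound}, and the proof of the corollary amounts to citing them in succession.
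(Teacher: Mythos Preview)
Your proposal is correct and matches the paper's approach exactly: the corollary is stated in the paper without an explicit proof precisely because it is an immediate concatenation of Theorem~\ref{thm:lowerbound} (the $\al$-bound) and Proposition~\ref{pro:upperbound} (the $\kp$-bound). Your remark that the consistency $\Na \le \Nk$ is a consequence rather than a hypothesis is also accurate and requires no further justification.
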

\vskip 3pt

\begin{example}
\label{ex:1}
Let $q = 2$, $m = n = 5$, $\delta = 2$, and $f(i) = i$ for all $i \in [m]$. Assume
\begin{eqnarray*} 
\X_1 = \{2,5\} \; , \quad 
\X_2 = \{1,3\} \; , \quad 
\X_3 = \{2,4\} \; , \\
\X_4 = \{3,5\} \; , \quad
\X_5 = \{1,4\} \; . 
\end{eqnarray*}
Let $\HH = \HH(5,5,\X,f)$. 
The side information graph $\G_\HH$ of this instance is a pentagon. It is easy to verify that $\aG = \al(\G) = 2$. 
It follows from Theorem~9 in~\cite{Yossef-journal} that $\kp_2(\HH) = \mrt(\G_\HH) =~3$. 
Thus, from~\cite{Grassl} we have 
\[
N_2[2,5] = 8 \quad \mbox{ and } \quad N_2[3,5] = 10 \; . 
\]
Due to Corollary~\ref{coro:sandwiched}, we have
\[
8 \leq \N_2(\HH,2) \leq 10. 
\]
Using a computer search, we obtain that $\N_2(\HH,2) = 9$, and the corresponding optimal scheme is based on
\[
\bL = \begin{pmatrix} 
1  &1 & 1  &1 & 1&  0&  0 & 0  &0  \\
0  &1 & 0  &1  &1  &0  &1 & 1 & 0  \\
1  &1  &0  &0  &0  &1  &1  &1  &0  \\
0  &1  &1  &0  &0  &1  &0  &1  &1  \\
1  &0  &1  &0  &1  &0  &0  &1  &1  \\
\end{pmatrix} \; . 
\]
It is technical to verify that by Lemma~\ref {lem:decodability}, $\bL$
corresponds to $(2, \HH)$-ECIC. The length of this ECIC 
lies strictly between the $\al$-bound and the $\kp$-bound. 
\end{example}
\vskip 7pt 

\begin{remark}
Example~\ref{ex:1} illustrates that over small alphabets, the concatenation of an 
optimal linear (non-error-correcting) index code and an optimal linear error-correcting code may fail to 
produce an optimal linear ECIC. 
\end{remark}

\section{The Singleton Bound}
\label{sec:singleton}

\begin{theorem}[Singleton bound]
\label{thrm:singleton}
The length of an optimal  linear {\dd} over $\fq$
satisfies
\[
\NX \geq \kG + 2 \delta \; .
\]
\end{theorem}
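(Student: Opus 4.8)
The plan is to mimic the classical proof of the Singleton bound $N_q[k,d] \ge k + d - 1$ via a puncturing argument, relying on the two characterizations already available: Lemma~\ref{lem:decodability}, which describes $(\delta,\HH)$-ECIC's through the weight condition $\weight(\bz\bL) \ge 2\delta + 1$ for all $\bz \in \IX$, and Lemma~\ref{lem:recovery}, which identifies $\kG$ as the shortest length of a (non-error-correcting) $\HH$-IC. Concretely, I would start from an optimal $(\delta,\HH)$-ECIC and puncture it in $2\delta$ coordinates to produce a shorter matrix that still corresponds to an $\HH$-IC; the lower bound $\kG$ on the length of any $\HH$-IC then yields the claim.

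First, let $\bL$ be an $n \times N$ matrix corresponding to an optimal linear $(\delta,\HH)$-ECIC over $\fq$, so that $N = \NX$. By Lemma~\ref{lem:decodability}, every $\bz \in \IX$ satisfies $\weight(\bz\bL) \ge 2\delta + 1$. Now delete any $2\delta$ columns of $\bL$ (the particular choice being irrelevant), obtaining an $n \times (N - 2\delta)$ matrix $\bL'$. For each fixed $\bz \in \IX$, the vector $\bz\bL'$ is simply $\bz\bL$ with those same $2\delta$ coordinates removed, so the deletion can lower the Hamming weight by at most $2\delta$; hence $\weight(\bz\bL') \ge (2\delta + 1) - 2\delta = 1 > 0$, i.e. $\bz\bL' \ne \bO$.

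Next I would observe that this is exactly the recoverability condition for a non-error-correcting index code: applying Lemma~\ref{lem:decodability} with $\delta = 0$ (equivalently, the characterization of Lemma~\ref{lem:recovery}), the matrix $\bL'$ corresponds to an $\HH$-IC over $\fq$ precisely because $\weight(\bz\bL') \ge 1$ for all $\bz \in \IX$. Crucially, the set $\IX$ depends only on $\HH$ and $q$, not on the number of columns, so it is unchanged by the puncturing. Consequently $\bL'$ is an $\HH$-IC of length $N - 2\delta$, and part~2 of Lemma~\ref{lem:recovery} forces $N - 2\delta \ge \kG$, that is, $\NX = N \ge \kG + 2\delta$.

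There is no serious obstacle here; the one point requiring care is that the weight bound of Lemma~\ref{lem:decodability} holds for all of $\IX$ simultaneously, so that a single punctured matrix $\bL'$ works for every $\bz \in \IX$ at once. This is what legitimizes an arbitrary deletion of $2\delta$ coordinates rather than a coordinate choice tailored to each receiver. The remaining step, recognizing that the $\delta = 0$ specialization of the weight condition coincides with the $\HH$-IC recovery condition of Lemma~\ref{lem:recovery}, is the conceptual crux that converts the classical puncturing idea into the stated index-coding bound.
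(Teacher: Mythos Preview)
Your proposal is correct and follows essentially the same approach as the paper: puncture $2\delta$ columns from an optimal $(\delta,\HH)$-ECIC matrix $\bL$, use Lemma~\ref{lem:decodability} to see that the punctured matrix $\bL'$ still satisfies $\weight(\bz\bL')\ge 1$ for all $\bz\in\IX$ and hence corresponds to an $\HH$-IC, and then invoke part~2 of Lemma~\ref{lem:recovery} to conclude $\NX-2\delta\ge\kG$. Your write-up even makes explicit a couple of points the paper leaves implicit (that puncturing can reduce weight by at most $2\delta$, and that $\IX$ is independent of the number of columns).
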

\begin{proof}
Let $\bL$ be the $n \times \NX$ matrix corresponding to some optimal {\dd}. 
Let $\bL'$ be the matrix obtained by deleting any $2 \delta$ 
columns from $\bL$. 

By Lemma~\ref{lem:decodability}, $\bL$ satisfies
for all $\bz \in \IX$, 
\[
\weight(\bz \bL) \ge 2 \delta + 1 \; . 
\]
We deduce that the rows of $\bL'$ also satisfy that for all $\bz \in \IX$, 
\[
\weight(\bz \bL') \ge 1 \; . 
\]
By Lemma~\ref{lem:decodability}, $\bL'$ corresponds to a linear {\mic}.  
Therefore, by Lemma~\ref {lem:recovery}, part 2, $\bL'$ has at least $\kG$ columns. We deduce that
\[
\NX - 2 \delta \geq \kG \; ,
\]
which concludes the proof. 
\end{proof}
\vskip 10pt

The corollary below shows that for sufficiently large alphabets, a concatenation of a classical MDS error-correcting code
with an optimal (non-error-correcting) index code yields an optimal linear ECIC. 
\vskip 10pt
\begin{corollary} [MDS error-correcting index code]
For $q \geq \kG + 2 \delta - 1$, 
\begin{equation}
\NX = \kG + 2 \delta \; . 
\label{eq:mds}
\end{equation}
\end{corollary}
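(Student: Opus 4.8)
The plan is to sandwich $\NX$ between the Singleton bound just proved and the $\kp$-bound, and then to show that the two coincide under the hypothesis $q \geq \kG + 2\delta - 1$. The lower bound $\NX \geq \kG + 2\delta$ is exactly Theorem~\ref{thrm:singleton}, so nothing further is needed there. For the upper bound, recall from Proposition~\ref{pro:upperbound} that $\NX \leq \Nk$, where $\Nk = N_q[\kG, 2\delta+1]$ is the length of a shortest linear code over $\fq$ of dimension $\kG$ and minimum distance $2\delta+1$. It therefore suffices to prove that $N_q[\kG, 2\delta+1] = \kG + 2\delta$ whenever $q \geq \kG + 2\delta - 1$.

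Next I would invoke the classical Singleton bound for linear codes, $N_q[k,d] \geq k + d - 1$, which with $k = \kG$ and $d = 2\delta+1$ reads $N_q[\kG, 2\delta+1] \geq \kG + 2\delta$. Equality holds precisely when a maximum-distance-separable (MDS) code with these parameters exists, that is, an $[\kG + 2\delta,\, \kG,\, 2\delta+1]_q$ code. The crux is to supply such a code under the stated alphabet condition. Here I would use (doubly) extended Reed--Solomon codes, which furnish an $[N, k, N-k+1]_q$ MDS code for every length $N \leq q+1$ and every dimension $k \leq N$. Setting $N = \kG + 2\delta$ and $k = \kG$, the required length satisfies $N = \kG + 2\delta \leq q + 1$ exactly because $q \geq \kG + 2\delta - 1$, while $N - k + 1 = 2\delta + 1$. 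Hence $N_q[\kG, 2\delta+1] = \kG + 2\delta$ in this regime, and the $\kp$-bound yields $\NX \leq \kG + 2\delta$. Combined with Theorem~\ref{thrm:singleton}, this gives the claimed equality~(\ref{eq:mds}).

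The main obstacle is guaranteeing that an MDS code of exactly these parameters exists over $\fq$, and this is precisely what forces the hypothesis $q \geq \kG + 2\delta - 1$: it is the sharp threshold at which (doubly) extended Reed--Solomon codes of length $\kG + 2\delta$ become available. Operationally, the corollary says that over a sufficiently large alphabet one simply concatenates an optimal non-error-correcting $\HH$-IC of length $\kG$ (whose matrix $\bG$ is guaranteed by Lemma~\ref{lem:recovery}) with an MDS code of minimum distance $2\delta+1$, as in the proof of the $\kp$-bound, and the resulting ECIC meets the Singleton bound.
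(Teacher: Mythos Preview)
Your proof is correct and follows the same approach as the paper, which simply cites Theorem~\ref{thrm:singleton} and Proposition~\ref{pro:upperbound}. You have filled in the implicit step the paper omits, namely that the existence of a doubly extended Reed--Solomon $[\kG+2\delta,\kG,2\delta+1]_q$ code forces $N_q[\kG,2\delta+1]=\kG+2\delta$ precisely when $q\geq\kG+2\delta-1$.
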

\begin{proof} Follows from Theorem~\ref{thrm:singleton} and Proposition~\ref{pro:upperbound}.
\end{proof} 

\medskip
\begin{remark}
There exist hypergraph $\HH$, such that $\G_\HH$ is the (symmetric) odd cycle of length $n$, 
for which the $\al$-bound is at least as good as the Singleton bound. 
\end{remark}

\section{Random codes}
\label{sec:random}

\begin{theorem}
\label{thrm:random}
Let $\HH = \HH(m,n,\X,f)$ describe an instance of the ICSI problem. 
Then there exists a {\dd} over $\fq$ of length $N$ if
\begin{equation}
\sum_{i \in [m]} q^{n - |\X_i| - 1} < \frac{q^N}{V_q(N, 2\delta)} \; , 
\label{eq:random-bound}
\end{equation}
where    
\[
V_q(N, 2\delta) = \sum_{\ell = 0}^{2 \delta} {N \choose \ell} (q-1)^\ell
\]
is the volume of the $q$-ary sphere in $\fq^N$. 
\end{theorem}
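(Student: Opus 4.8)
The plan is to use the probabilistic method via a union bound, showing that a random choice of the matrix $\bL$ satisfies the decodability criterion of Lemma~\ref{lem:decodability} with positive probability whenever inequality~(\ref{eq:random-bound}) holds. By Lemma~\ref{lem:decodability}, the matrix $\bL$ corresponds to a $(\delta,\HH)$-ECIC if and only if $\weight(\bz\bL) \ge 2\delta+1$ for every $\bz \in \IX$. So the strategy is to pick the $n \times N$ matrix $\bL$ uniformly at random (each entry independently and uniformly from $\fq$), bound the probability that some $\bz \in \IX$ gives a ``bad'' codeword $\bz\bL$ of weight at most $2\delta$, and show this probability is strictly less than $1$ under~(\ref{eq:random-bound}).

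First I would analyze a single fixed nonzero $\bz \in \IX$. Since $\bz \ne \bO$, for a uniformly random $\bL$ the product $\bz\bL$ is a uniformly distributed vector in $\fq^N$ (each of its $N$ coordinates is an independent uniform element of $\fq$, because $\bz$ has at least one nonzero entry so each column's contribution is uniform). The probability that $\weight(\bz\bL) \le 2\delta$ is therefore exactly $V_q(N,2\delta)/q^N$, the fraction of the ambient space lying in a Hamming ball of radius $2\delta$. Next I would apply the union bound over all $\bz \in \IX$:
\[
\bbP\big[\exists \bz \in \IX : \weight(\bz\bL) \le 2\delta\big] \le |\IX| \cdot \frac{V_q(N,2\delta)}{q^N}\; .
\]
If this quantity is strictly less than $1$, then with positive probability $\bL$ is a valid $(\delta,\HH)$-ECIC, and in particular such a matrix exists.

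The remaining step is to bound the cardinality $|\IX|$ and match it to the left-hand side of~(\ref{eq:random-bound}). By definition, $\bz \in \IX$ requires that there exists some $i \in [m]$ with $\bz_{\X_i} = \bO$ and $z_{f(i)} \ne 0$. The number of vectors $\bz$ satisfying the condition for a fixed $i$ is at most $q^{n-|\X_i|-1}(q-1) \le q^{n-|\X_i|}$: the $|\X_i|$ coordinates indexed by $\X_i$ are forced to zero, and the remaining $n - |\X_i|$ coordinates are free. Summing over $i$ via the union structure of $\IX$ gives $|\IX| \le \sum_{i \in [m]} q^{n-|\X_i|}$. A slightly more careful count, fixing $z_{f(i)}$ to one of the $q-1$ nonzero values and leaving the other $n - |\X_i| - 1$ coordinates free, yields $|\IX| \le \sum_{i \in [m]}(q-1)q^{n-|\X_i|-1}$, but to reach the exact bound in~(\ref{eq:random-bound}) I would account for the fact that distinct values of $z_{f(i)}$ correspond to scalar multiples $z\bz$ sharing the same support pattern; since $\weight(\bz\bL) \le 2\delta$ holds simultaneously for all nonzero scalar multiples, it suffices to union-bound over one representative per scalar line, reducing the count by a factor of $(q-1)$ and giving effectively $\sum_{i\in[m]} q^{n-|\X_i|-1}$ distinct ``bad events'' to control.

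The main obstacle I anticipate is making this final counting step fully rigorous: the sets of $\bz$ attached to different indices $i$ overlap, and I must be careful that passing to scalar-line representatives is legitimate, namely that $\weight(\bz\bL) \le 2\delta$ if and only if $\weight((c\bz)\bL) \le 2\delta$ for every $c \in \fq^*$. This equivalence is immediate since scaling preserves support and hence weight, which lets me group the $\bz$'s into $(q-1)$-element scalar classes and count only $\sum_{i\in[m]} q^{n-|\X_i|-1}$ representatives. Combining the single-vector probability, the union bound over these representatives, and the resulting strict inequality $\sum_{i\in[m]} q^{n-|\X_i|-1}\cdot V_q(N,2\delta) < q^N$ establishes the existence of a valid code and completes the argument.
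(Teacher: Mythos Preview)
Your proof is correct and follows essentially the same probabilistic-method argument as the paper: choose $\bL$ uniformly at random and union-bound the failure probability below $1$. The only cosmetic difference is organizational: the paper groups the union bound via Corollary~\ref{cor:decodability}, bounding $\bbP\big[\dist(\bL_{f(i)},\bM_i)<2\delta+1\big]$ by $|\bM_i|\cdot V_q(N,2\delta)/q^N \le q^{n-|\X_i|-1}\,V_q(N,2\delta)/q^N$ for each $i\in[m]$, whereas you group it via Lemma~\ref{lem:decodability} over scalar classes of $\bz\in\IX$; fixing the representative $z_{f(i)}=1$ makes the two decompositions coincide term by term.
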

\emph{Idea of proof:}
We construct a random $n \times N$ matrix $\bL$ over $\fq$, row by row. 
Each row is selected independently of other rows, uniformly over $\fq^N$.
The result is obtained by bounding from above the probability of the event
\begin{eqnarray*}
\bigcup_{i \in [m]} E_i \; , \; \mbox{ where $E_i$} \; \define \; \left\{ \dist( \bL_{f(i)}, \bM_i ) < 2 \delta + 1 \right\} \; ,
\end{eqnarray*}
and by making this probability less than $1$.

\begin{remark}
The bound in Theorem~\ref{thrm:random} implies a bound on $\kp_q(\HH)$, which is tight for some $\HH$. 
Indeed, fix $\delta = 0$. Take $m = n = 2\ell + 1$ ($\ell \geq 2$), and $f(i) = i$ for all $i \in [n]$. 
Let $\X_1 = [n] \backslash \{1,2,n\}$ and $\X_n = [n] \backslash \{1,n-1,n\}$. 
For $2 \leq i \leq n - 1$, let $\X_i = [n] \backslash \{i-1,i,i+1\}$. 
Take $\HH = \HH(n,n,\X,f)$.  
Then $\G_\HH$ is the complement of the (symmetric directed) odd cycle of length $n$. 
We have $|\X_i| = 2\ell  - 2$ for all $i \in [n]$. 
Then~(\ref{eq:random-bound}) becomes
\[
N > 2 + \log_q(2\ell + 1) \; . 
\]
If $q > 2 \ell + 1$ then we obtain $N \ge 3$. Observe that in this case $\kp_q(\HH) = \mrq(\G_\HH)=3$ (see~\cite[Claim A.1]{Alon}), and thus the bound is tight.
\end{remark}

\section{Syndrome decoding}
\label{subsec:decoding}

Consider the $(\delta, \HH)$-ECIC based on a matrix $\bL$. 
Suppose that the receiver $R_i$, $i \in [m]$, receives the vector  
\begin{equation}
\by_i = \bx \bL + \bep_i \; ,
\label{eq:error-i}
\end{equation}
where $\bx \bL$ is the codeword transmitted by $S$, and $\bep_i$
is the error pattern affecting this codeword. 

In the classical coding theory, the transmitted vector $\bc$, the received vector $\by$, and the error pattern $\be$ are
related by $\by = \bc + \be$. For index coding, however, this is no longer the case. 
The following theorem shows that, in order to recover 
the message $x_{f(i)}$ from $\by_i$ using~(\ref{eq:error-i}), it is sufficient to find just one 
vector from a set of possible error patterns. This set is defined as follows: 
\[
\LL_i(\bep_i) = \left\{ \bep_i + \bz \; : \; \bz \in \spn(\{\bL_j\}_{j \in \Y_i}) \right\} \; . 
\] 
We henceforth refer to the set $\LL_i(\bep_i)$ as the \emph{set of relevant error patterns}. 

\vskip 10pt
\begin{lemma}
\label{thm:relevant_e}
Assume that the receiver $R_i$ receives $\by_i$.
\begin{enumerate}
\item
If $R_i$ knows the message $x_{f(i)}$ then it is able to 
determine the set $\LL_i(\bep_i)$. 
\item
If $R_i$ knows some vector $\beph \in \LL_i(\bep_i)$ then it is able to 
determine $x_{f(i)}$. 
\end{enumerate}
\end{lemma}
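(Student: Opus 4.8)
The plan is to exploit the linearity of $\by_i = \bx \bL + \bep_i$ together with the decodability criterion of Corollary~\ref{cor:decodability}. First I would expand the transmitted codeword by splitting the index set $[n]$ into $\{f(i)\}$, $\X_i$, and $\Y_i$:
\[
\bx \bL = \sum_{j=1}^{n} x_j \bL_j = x_{f(i)} \bL_{f(i)} + \sum_{j \in \X_i} x_j \bL_j + \sum_{j \in \Y_i} x_j \bL_j \; .
\]
Since $R_i$ knows both the matrix $\bL$ and its side information $\bx_{\X_i}$, it can subtract the middle sum from the received vector and form
\[
\bw \define \by_i - \sum_{j \in \X_i} x_j \bL_j = x_{f(i)} \bL_{f(i)} + \bz_0 + \bep_i \; ,
\]
where $\bz_0 \define \sum_{j \in \Y_i} x_j \bL_j$ is an (a priori unknown) element of $\bM_i = \spn(\{\bL_j\}_{j \in \Y_i})$. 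This reduction is the common first move for both parts, and note that $\LL_i(\bep_i) = \bep_i + \bM_i$ is exactly the coset of $\bM_i$ through $\bep_i$.

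For part~1, assuming $x_{f(i)}$ is known, $R_i$ subtracts $x_{f(i)}\bL_{f(i)}$ from $\bw$ to obtain the single vector $\bep_i + \bz_0$. Because $\bz_0 \in \bM_i$, adding the whole subspace back recovers
\[
(\bep_i + \bz_0) + \bM_i = \bep_i + \bM_i = \LL_i(\bep_i) \; ,
\]
so $R_i$, knowing $\bM_i$, reconstructs the entire set $\LL_i(\bep_i)$.

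For part~2, suppose $R_i$ is given some $\beph = \bep_i + \bz_1 \in \LL_i(\bep_i)$ with $\bz_1 \in \bM_i$. Subtracting it from $\bw$ cancels the error term and leaves
\[
\bw - \beph = x_{f(i)} \bL_{f(i)} + (\bz_0 - \bz_1) \in x_{f(i)} \bL_{f(i)} + \bM_i \; .
\]
It then remains to show that this coset determines the scalar $x_{f(i)}$ uniquely, i.e.\ that the map $a \mapsto a\bL_{f(i)} + \bM_i$ from $\fq$ to the cosets of $\bM_i$ is injective. This is where the error-correcting hypothesis enters: by Corollary~\ref{cor:decodability}, $\dist(\bL_{f(i)}, \bM_i) \ge 2\delta + 1 \ge 1$, so $\bL_{f(i)} \notin \bM_i$, and since $\bM_i$ is a subspace, $c\bL_{f(i)} \notin \bM_i$ for every nonzero $c \in \fq$. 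Hence $(x_{f(i)} - a)\bL_{f(i)} \in \bM_i$ forces $a = x_{f(i)}$, and testing each candidate $a \in \fq$ for whether $\bw - \beph - a\bL_{f(i)} \in \bM_i$ isolates the unique value $a = x_{f(i)}$.

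The index-set splitting and coset manipulations are routine bookkeeping; the one genuinely substantive step, which I would single out, is the injectivity of $a \mapsto a\bL_{f(i)} + \bM_i$ in part~2. It is the only place the ECIC property of $\bL$ is invoked, and it is precisely what guarantees that a single representative of $\LL_i(\bep_i)$ suffices to pin down $x_{f(i)}$.
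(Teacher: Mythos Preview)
Your argument is correct. The paper itself does not supply a proof of this lemma; it is stated without proof and then used in the description of syndrome decoding that follows. Your approach---subtracting the side-information contribution to isolate $\bw = x_{f(i)}\bL_{f(i)} + \bz_0 + \bep_i$ and then working modulo the subspace $\bM_i$---is precisely the natural route and is consistent with the coset reasoning implicit in the paper's subsequent syndrome computation $\bHi(\by_i - \bx_{\X_i}\bL_{\X_i})^T$.

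One small remark: in part~2 you invoke Corollary~\ref{cor:decodability} to get $\dist(\bL_{f(i)},\bM_i)\ge 2\delta+1$, but you only use the consequence $\bL_{f(i)}\notin\bM_i$. That weaker fact already follows from $\bL$ being merely a (non-error-correcting) $\HH$-IC, so part~2 holds in slightly greater generality than your phrasing suggests. This is not a gap, just a sharpening of where the hypothesis is actually needed.
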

\vskip 10pt 

We now describe a syndrome decoding algorithm for linear error-correcting index codes.
We have
\[
\by_i - \bx_{\X_i}\bL_{\X_i} - \bep_i \in \spn\big(\{\bL_{f(i)}\} \cup \{\bL_j\}_{j \in \Y_i}\big) \; .
\]
Let $\cC_i = \spn(\{\bL_{f(i)}\} \cup \{\bL_j\}_{j \in \Y_i})$, and let $\bHi$ be a parity check matrix 
of $\cC_i$. We obtain that  
\begin{equation*} 
\bHi \bep_i^T = \bHi(\by_i - \bx_{\X_i}\bL_{\X_i})^T \; .
\label{eq:beta-i}
\end{equation*} 
Let $\bbti$ be a column vector defined by
\begin{equation*} 
\label{2:E12}
\bbti = \bHi(\by_i - \bx_{\X_i}\bL_{\X_i})^T \; .
\end{equation*} 
Observe that each $R_i$ is capable of determining $\bbti$. 
This leads us to the formulation of the decoding procedure for $R_i$ in Figure~\ref{fig:decoder}. 
{
\begin{figure}[htb]
\hrule
\vspace{2ex}
\begin{itemize}
  \item {\it Input:} $\by_i$, $\bx_{\X_i}$, $\bL$. 
  \vspace{1ex}
	\item {\it Step 1}: Compute the syndrome
	\[
	\bbti = \bHi(\by_i - \bx_{\X_i}\bL_{\X_i})^T \; . 
	\]  
  \item {\it Step 2}: Find the lowest Hamming weight solution $\hat{\bep}$ of the system
\begin{equation*}
\label{2:E13}
\bHi\hat{\bep}^T = \bbti \; .
\end{equation*} 
   \item {\it Step 3}: Given that $\bxh_{\X_i} = \bx_{\X_i}$, solve the system for $\hat{x}_{f(i)}$: 
	    \begin{equation*} 
			\label{2:E15}
			\by_i = \hat{\bx} \bL + \hat{\bep}.
			\end{equation*} 
   \item {\it Output:} $\hat{x}_{f(i)}$.
\end{itemize}
\vspace{2ex}
\hrule
\vspace{1ex}
\caption{Syndrome decoding procedure.}
\label{fig:decoder}
\end{figure}
}
\begin{theorem}
Let $\by_i = \bx \bL + \bep_i$ be the vector received by $R_i$, and let $\weight(\bep_i) \le \delta$. 
Assume that the procedure in Figure~\ref{fig:decoder} is applied to $(\by_i, \bx_{\X_i}, \bL)$. 
Then, its output satisfies $\hat{x}_{f(i)} = x_{f(i)}$. 
\end{theorem}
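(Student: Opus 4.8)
The plan is to reduce the theorem to Lemma~\ref{thm:relevant_e}, part 2, by showing that the minimum-weight vector $\hat{\bep}$ produced in Step~2 is a relevant error pattern, that is, $\hat{\bep} \in \LL_i(\bep_i)$. Once this is established, $R_i$ knows a vector of $\LL_i(\bep_i)$ and can therefore recover $x_{f(i)}$; I will also check that Step~3 implements this recovery and outputs exactly $x_{f(i)}$.

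First I would verify that the syndrome computed in Step~1 equals $\bHi\bep_i^T$. Writing $[n]$ as the disjoint union $\{f(i)\} \cup \X_i \cup \Y_i$ and expanding $\by_i = x_{f(i)}\bL_{f(i)} + \bx_{\X_i}\bL_{\X_i} + \sum_{j \in \Y_i} x_j\bL_j + \bep_i$, one sees that $\by_i - \bx_{\X_i}\bL_{\X_i}$ differs from $\bep_i$ by the vector $x_{f(i)}\bL_{f(i)} + \sum_{j \in \Y_i} x_j\bL_j$, which lies in $\cC_i$. Since $\bHi$ annihilates $\cC_i$, this gives $\bbti = \bHi(\by_i - \bx_{\X_i}\bL_{\X_i})^T = \bHi\bep_i^T$. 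In particular $\bep_i$ is itself a solution of $\bHi\bu^T = \bbti$, so the minimality of $\hat{\bep}$ yields $\weight(\hat{\bep}) \le \weight(\bep_i) \le \delta$.

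The crux of the argument --- and the step I expect to be the main obstacle --- is proving that $\hat{\bep} - \bep_i$ has no component along $\bL_{f(i)}$. Both $\hat{\bep}$ and $\bep_i$ satisfy the same syndrome equation, so $\hat{\bep} - \bep_i$ lies in the kernel of $\bHi$, namely $\cC_i = \spn(\{\bL_{f(i)}\} \cup \{\bL_j\}_{j \in \Y_i})$; hence I may write $\hat{\bep} - \bep_i = a\bL_{f(i)} + \bw$ with $a \in \fq$ and $\bw \in \bM_i$. I claim $a = 0$. Suppose not. Then $\bL_{f(i)} - (-a^{-1}\bw) = a^{-1}(\hat{\bep} - \bep_i)$, so the distance from $\bL_{f(i)}$ to the point $-a^{-1}\bw \in \bM_i$ equals $\weight(a^{-1}(\hat{\bep} - \bep_i)) = \weight(\hat{\bep} - \bep_i) \le \weight(\hat{\bep}) + \weight(\bep_i) \le 2\delta$, where scaling by the nonzero $a^{-1}$ leaves the weight unchanged. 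This forces $\dist(\bL_{f(i)}, \bM_i) \le 2\delta$, contradicting Corollary~\ref{cor:decodability}, which guarantees $\dist(\bL_{f(i)}, \bM_i) \ge 2\delta + 1$ because $\bL$ corresponds to a $(\delta,\HH)$-ECIC. Hence $a = 0$ and $\hat{\bep} - \bep_i = \bw \in \spn(\{\bL_j\}_{j \in \Y_i})$, which is precisely $\hat{\bep} \in \LL_i(\bep_i)$.

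Finally I would close the argument at Step~3. Using $\hat{\bep} = \bep_i + \bw$, the fixed vector $\by_i - \hat{\bep} - \bx_{\X_i}\bL_{\X_i} = x_{f(i)}\bL_{f(i)} + \sum_{j \in \Y_i} x_j\bL_j - \bw$ equals $x_{f(i)}\bL_{f(i)}$ plus an element of $\bM_i$. Any solution $\hat{\bx}$ of the Step~3 system with $\hat{\bx}_{\X_i} = \bx_{\X_i}$ expresses the same fixed vector as $\hat{x}_{f(i)}\bL_{f(i)}$ plus an element of $\bM_i$; since $\dist(\bL_{f(i)}, \bM_i) \ge 2\delta + 1 > 0$ implies $\bL_{f(i)} \notin \bM_i$, the coefficient of $\bL_{f(i)}$ is uniquely determined. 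Therefore $\hat{x}_{f(i)} = x_{f(i)}$, regardless of the remaining, possibly non-unique, coordinates of $\hat{\bx}$.
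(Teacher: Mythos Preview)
Your proposal is correct and follows essentially the same route the paper sketches in the Remark after the theorem: establish that $\hat{\bep}\in\LL_i(\bep_i)$ and then invoke Lemma~\ref{thm:relevant_e}, part~2. The paper omits the details of why $\hat{\bep}\in\LL_i(\bep_i)$; your argument via Corollary~\ref{cor:decodability} (writing $\hat{\bep}-\bep_i=a\bL_{f(i)}+\bw$ and ruling out $a\neq 0$ by the distance bound) is exactly the intended mechanism, and your explicit verification that Step~3 then forces $\hat{x}_{f(i)}=x_{f(i)}$ via the direct-sum decomposition $\spn(\bL_{f(i)})\oplus\bM_i$ is a clean way to unpack what the lemma asserts abstractly.
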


\begin{remark}
It is not impossible that $\hat{\bep} \neq \bep_i$. However, if $\weight(\bep_i) \leq \delta$, 
it can be shown that $\hat{\bep} \in \LL_i(\bep_i)$. Hence, by Lemma~\ref{thm:relevant_e}, 
we have $\hat{x}_{f(i)} = x_{f(i)}$. 
\end{remark}

\section{Acknowledgements}

The authors would like to thank the authors of~\cite{Yossef-journal} for providing a preprint of their paper. 
This work is supported by the National Research Foundation of Singapore (Research Grant
NRF-CRP2-2007-03).
 
\bibliographystyle{IEEEtran}
\bibliography{IndexCoding_ErrorCorrection}

\end{document}